\newtheorem{theorem}{Theorem}
\title{Solutions of  the nonlocal nonlinear Schr\"odinger hierarchy via reduction}
\author{Kui Chen, Da-jun Zhang\footnote{Corresponding author. Email: djzhang@staff.shu.edu.cn}\\
{\small  Department of Mathematics, Shanghai University, Shanghai 200444, P.R. China}
}
\date{\today}
\begin{document}
\maketitle

\begin{abstract}

   In this letter we propose an approach to obtain solutions
   for the nonlocal nonlinear Schr\"{o}dinger hierarchy from the known ones  of
   the Ablowitz-Kaup-Newell-Segur hierarchy by reduction.
   These solutions are presented in terms of double Wronskian and some of them are new.
   The approach is general and can be used for other systems with double Wronskian solutions
   which admit local and nonlocal reductions.

\begin{description}
\item[MSC:] 35Q51, 35Q55
\item[PACS:]
02.30.Ik, 02.30.Ks, 05.45.Yv
\item[Keywords:]
 nonlocal nonlinear Schr\"{o}dinger hierarchy, Ablowitz-Kaup-Newell-Segur hierarchy, double Wronskian, solutions, reduction
\end{description}
\end{abstract}

\section{Introduction}

Recently integrable continuous and semidiscrete nonlocal nonlinear Schr\"{o}dinger (NLS) equations
describing wave propagation in nonlinear PT symmetric media were  found \cite{AM-PRL-2013,AM-PRE-2014}
and have drown much attention. The nonlocal NLS hierarchy are  derived from the
Ablowitz-Kaup-Newell-Segur (AKNS) hierarchy through  the reduction \cite{AM-PRL-2013}
\begin{equation}
r(x,t)=\mp q^*(-x,t)
\label{red}
\end{equation}
where $*$ denotes complex conjugate.
As for solutions, these nonlocal models have been solved through Inverse Scattering Transform, Darboux transformation,
{bilinear approach}, etc. (cf.\cite{AM-Nonl-2016,samra-2014,zhu-2016,zhu-2015,lixutao-2015,sinha-2014,khare-2014,yan-2015,huang2016,Zhou-arxiv}).

In this letter, we propose an approach to construct solutions for the nonlocal NLS hierarchy
directly from the known solutions of the AKNS hierarchy. These solutions are presented in terms of double Wronskian.
Wronskian technique is an elegant way to construct and  verify solutions for soliton equations, which is  developed by
Freeman and Nimmo \cite{nimmo-1983}.
The double Wronskian solutions for the focusing NLS equation was first given in \cite{NF-PLA-NLS}.
In 1990 Liu \cite{liu1990} showed the whole AKNS hierarchy admitted solutions in double Wronskian form.
In this letter we directly apply the reduction \eqref{red} to the solutions of the AKNS hierarchy obtained in \cite{liu1990}
and construct solutions for the nonlocal focusing and defocusing NLS hierarchies.
Some of solutions are new, which correspond to multiple poles of transmission coefficient.

The letter is organized as follows.
We will first in Sec.\ref{sec-2} review known results on the  AKNS hierarchy and their double Wronskian solutions.
Then in Sec.\ref{sec-3} we prove the condition under which
double Wronskian solutions of the even order AKNS hierarchy are reduced to those of the nonlocal NLS hierarchies.
In Sec.\ref{sec-4} we present concrete Wronskian elements and some solutions.
Finally, conclusion is given in Sec.\ref{sec-5}.

\section{AKNS hierarchy and double Wronskian solutions}\label{sec-2}

It is well known that the AKNS spectral problem \cite{AKNS-PRL-1973}
\begin{equation}\label{akns-spectral}
 \left( \begin{array}{c}
 \phi_1 \\
 \phi_2
 \end{array}
 \right)_x
 =\left(
 \begin{array}{cc}
             \lambda & q \\
             r & -\lambda
           \end{array}
 \right)
 \left(   \begin{array}{c}
  \phi_1 \\
  \phi_2
  \end{array}
  \right)
\end{equation}
yields the isospectral  AKNS hierarchy
\begin{equation}\label{akns-hierarchy}
 \left(
   \begin{array}{c}
     q_{t_{n}} \\
     r_{t_{n}} \\
   \end{array}
 \right)=K_{n}=L^n \left(
                      \begin{array}{c}
                        -q \\
                        r \\
                      \end{array}
                    \right),~~n\geq 0,
\end{equation}
where $L$ is a recursion operator
\begin{equation}\label{akns-recusion-operator}
 L= \left(
            \begin{array}{cc}
              -\partial_x+2q \partial_x^{-1}r & 2q \partial_x^{-1}q \\
              -2r \partial_x^{-1}r & \partial_x- 2r \partial_x^{-1}q
            \end{array}
          \right),
\end{equation}
in which $\partial_x=\frac{\partial}{\partial_x}$ and $\partial_x=\partial^{-1}_x=\partial^{-1}_x\partial_x=1$.
An alternative form of \eqref{akns-hierarchy} is
\begin{equation}\label{akns-hierarchy-2}
 \left(
   \begin{array}{c}
     q_{t_{n+1}} \\
     r_{t_{n+1}} \\
   \end{array}
 \right)= L \left(
                      \begin{array}{c}
                        q_{t_{n}} \\
                        r_{t_{n}} \\
                      \end{array}
                    \right),~~ (n\geq1)
\end{equation}
with setting $t_1=x$.

By introducing   rational transformation
\begin{equation}\label{akns-transformation}
 q=2\frac{g}{f},~~r=2\frac{h}{f},
\end{equation}
the hierarchy \eqref{akns-hierarchy-2} can be written as the following bilinear form \cite{newell-1985}
\begin{subequations}\label{akns-bilibear}
\begin{eqnarray}
  &&(2D_{t_{n+1}}-D_{x}D_{t_n})g\cdot f =0,   \\
  &&(2D_{t_{n+1}}-D_{x}D_{t_n})f\cdot h =0 ,   \\
  &&D^2_{x}f\cdot f =8gh,
\end{eqnarray}
\end{subequations}
for $n\geq 1$, where  Hirota's bilinear operator $D$ is defined as\cite{Hirota2004}
\begin{equation*}
 D_x^m D_y^n f(x,y)\cdot g(x,y) = (\partial_x- \partial_{x'})^m(\partial_y- \partial_{y'})^n f(x,y)g(x',y')|_{x'=x,y'=y}.
\end{equation*}
System \eqref{akns-bilibear} admit double Wronskian solutions \cite{liu1990,yin2008}
\begin{subequations}\label{anks-solution-fgh}
 \begin{eqnarray}
 && f= W^{n+1,m+1}(\varphi,\psi)=|\widehat{\varphi}^{(n)}; \widehat{\psi}^{(m)}| , \label{1-solution-of-akns-f}  \\
 && g= W^{n+2,m}(\varphi,\psi)=|\widehat{\varphi}^{(n+1)}; \widehat{\psi}^{(m-1)}| , \label{1-solution-of-akns-g}\\
 && h= W^{n,m+2}(\varphi,\psi)=|\widehat{\varphi}^{(n-1)}; \widehat{\psi}^{(m+1)}|. \label{1-solution-of-akns-h}
 \end{eqnarray}
\end{subequations}
Here,  $\varphi$ and $\psi$ are respectively $(n+m+2)$-th order column vectors
\begin{equation}
\varphi=(\varphi_1,\varphi_2,\cdots,\varphi_{n+m+2})^T,~~
\psi=(\psi_1,\psi_2,\cdots,\psi_{n+m+2})^T
\label{phipsi}
\end{equation}
defined by (with $t_1=x$)
 \begin{equation}\label{akns-var-psi-A-c-d}
  \varphi=\exp{\Bigl(\sum^{\infty}_{j=1} A^j t_j\Bigr)}C, ~~\psi=\exp{\Bigl(- \sum^{\infty}_{j=1} A^j t_j\Bigr)}\overline{C},
 \end{equation}
where $A$ is a $(n+m+2)\times (n+m+2)$ constant matrix
\begin{subequations}\label{akns-A-c-d}
\begin{equation}
A=(k_{ij})_{(n+m+2)\times (n+m+2)},~~k_{ij}\in \mathbb{C}   \label{akns-A},
\end{equation}
and $C, \overline{C}$ are $(n+m+2)$-th order constant column vectors
\begin{equation}
C=(c_1,c_2,\cdots,c_{n+m+2})^T,~~ \overline{C}=(\overline{c}_1,\overline{c}_2,\cdots,\overline{c}_{n+m+2})^T;
\end{equation}
\end{subequations}
$\widehat{\varphi}^{(n)}$ and $\widehat{\psi}^{(m)}$ respectively denote $(n+m+2)\times (n+1)$ and $(n+m+2)\times (m+1)$
Wronski matrices
\begin{subequations}\label{akns-psi-varphi}
  \begin{eqnarray}
  &&\widehat{\varphi}^{(n)} = \Bigl(\varphi,\partial_x \varphi,\partial_x^2\varphi,\cdots,  \partial_x^{n}\varphi\Bigr),\\
  &&\widehat{\psi}^{(m)} = \Bigl(\psi,\partial_x \psi,\partial_x^2\psi,\cdots,\partial_x^{m}\psi\Bigr).
\end{eqnarray}
\end{subequations}

Return to the AKNS hierarchy \eqref{akns-hierarchy}. For a concrete system in this hierarchy,
for example, the second order AKNS system
\begin{subequations} \label{akns-2}
\begin{eqnarray}
&&q_{t_2}= -q_{xx}+2q^2r, \\
&&r_{t_2}=r_{xx}-2r^2q,
\end{eqnarray}
\end{subequations}
one can treat $t_j (j>2)$ as dummy variables which are then absorbed into $C$ and $\overline{C}$, and present their solutions through
\eqref{akns-transformation} and \eqref{anks-solution-fgh} where 
$\varphi$ and $\psi$ are defined as\cite{chen2008}
 \begin{equation}\label{akns-2-var-psi-A-c-d-even}
  \varphi=\exp{( Ax+ A^2 t_2)}C, ~~\psi=\exp{(-Ax- A^2 t_2)}\overline{C}.
 \end{equation}
With the same idea, for the even order AKNS hierarchy
\begin{equation}\label{akns-hierarchy-even}
 \left(
   \begin{array}{c}
     q_{t_{n}} \\
     r_{t_{n}} \\
   \end{array}
 \right)=K_{n}=L^n \left(
                      \begin{array}{c}
                        -q \\
                        r \\
                      \end{array}
                    \right),~~n=2,4,6,\cdots,
\end{equation}
their solutions can be described through \eqref{akns-transformation} and \eqref{anks-solution-fgh} where
\begin{equation}\label{akns-var-psi-A-c-d-even}
  \varphi=\exp{\Bigl( Ax+\sum^{\infty}_{j=1} A^{2j} t_{2j}\Bigr)}C, ~~\psi=\exp{\Bigl(-Ax- \sum^{\infty}_{j=1} A^{2j} t_{2j}\Bigr)}\overline{C}.
 \end{equation}

\section{Reduction to the nonlocal NLS hierarchies}\label{sec-3}

The nonlocal nonlinear Schr\"{o}dinger hierarchies \cite{AM-PRL-2013}
\begin{equation}\label{nls-non-hierarchy}
 iq(x)_{\tau_{2k}} = K^{\pm}_k[q(x),q(-x)],~~k =1,2,3,\ldots
\end{equation}
can be reduced from the even order AKNS hierarchy \eqref{akns-hierarchy-even} with the reductions
\begin{equation}\label{nls-non-reduction-qr}
 r(x)=\pm q^*(-x),~~t_{2k}=i\tau_{2k} ,~~ (x, \tau_{2k} \in \mathbb{R}),
\end{equation}
where the $\pm$ in $K^{\pm}_k$  corresponds to the sign ``$\pm$" in \eqref{nls-non-reduction-qr}.
For the case of $k=1$, the nonlocal focusing NLS equation reads
\begin{equation}\label{nls-non-focusing}
  iq(x)_{\tau_2} = K^{-}_1= q(x)_{xx} + 2q(x)^2q^*(-x),
\end{equation}
and the nonlocal defocusing NLS equation reads
\begin{equation}\label{nls-non-defocusing}
  iq(x)_{\tau_2}  = K^{+}_1 = q(x)_{xx} - 2q(x)^2q^*(-x).
\end{equation}
For $k=2$ we have
\begin{align*}
 iq(x)_{\tau_4} = K^{-}_2= & q(x)_{xxxx}   +8q(x)q(x)_xq^*(-x)+ 6(q(x)_x)^2q^*(-x)   \\
  & +4 q(x)q(x)_x q^*(-x)_x + 2 q(x)^2q^*(-x)_{xx} + 6 q(x)^3q^*(-x)^2 ,
\end{align*}
and
\begin{align*}
  iq(x)_{\tau_4} = K^{+}_2= & q(x)_{xxxx} -8q(x)q(x)_xq^*(-x)-6(q(x)_x)^2q^*(-x)    \\
& -4 q(x)q(x)_x q^*(-x)_x - 2 q(x)^2q^*(-x)_{xx} + 6 q(x)^3q^*(-x)^2.
\end{align*}

In the following, we derive solutions for the nonlocal hierarchies \eqref{nls-non-hierarchy} by imposing the reduction relations \eqref{nls-non-reduction-qr}
on the solutions of the even order AKNS hierarchy \eqref{akns-hierarchy-even}.

\begin{theorem} \label{theorem-nonlocal}
The nonlocal isospectral NLS hierarchies \eqref{nls-non-hierarchy} admit the following solutions
\begin{equation}\label{theorem-q-non}
 q(x)=2\frac{|\widehat{\varphi}^{(n+1)},\widehat{\psi}^{(n-1)}|}{|\widehat{\varphi}^{(n)},\widehat{\psi}^{(n)}|},
\end{equation}
where $\varphi$ and $\psi$ are $(2n+2)$-th order column vectors (i.e. $m=n$ in \eqref{phipsi}),
defined by \eqref{akns-var-psi-A-c-d-even} with $t_{2k}=i\tau_{2k}$ and satisfy
 \begin{subequations}\label{assum-1}
 \begin{align}
 & \psi(x) = T \varphi^*(-x), \label{non-constraint}\\
 & \overline{C}= TC^*,
 \label{DTC}
 \end{align}
 \end{subequations}
in which $T$ is a constant matrix  determined through
\begin{subequations}\label{theorem-non-AT}
\begin{eqnarray}
&& AT=TA^*,~~\label{theorem-1-a-t}\\
&& TT^*=\delta I,~~\delta=\mp 1. \label{theorem-1-t-t}
\end{eqnarray}
\end{subequations}
\end{theorem}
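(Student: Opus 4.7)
The approach is to apply the reduction constraints \eqref{assum-1}--\eqref{theorem-non-AT} directly to the double Wronskian solutions \eqref{anks-solution-fgh} of the AKNS hierarchy (with $m=n$) recalled in Sec.~\ref{sec-2}, and to verify that the associated pair $(q, r) = (2g/f,\, 2h/f)$ automatically satisfies the nonlocal reduction $r = \pm q^*(-x)$, so that $q$ solves \eqref{nls-non-hierarchy}. The argument splits naturally into two parts: first, checking that the pointwise constraint $\psi(x)=T\varphi^*(-x)$ is consistent with the explicit time evolution \eqref{akns-var-psi-A-c-d-even}, so that imposing it on the constant data $C, \overline{C}$ suffices; second, establishing Wronskian identities relating $f^*(-x), g^*(-x)$ to $f(x), h(x)$.

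For the first part, substituting $t_{2k}=i\tau_{2k}$ with $\tau_{2k}\in\mathbb{R}$ into \eqref{akns-var-psi-A-c-d-even} and taking complex conjugate followed by $x\to -x$ gives
\[
 \varphi^*(-x)=\exp\Bigl(-A^* x-\sum_{j\geq 1}(A^*)^{2j}t_{2j}\Bigr)C^*.
\]
The intertwining \eqref{theorem-1-a-t} implies $A^jT=T(A^*)^j$ for every $j\geq 0$, so $T$ passes through the exponential, and \eqref{DTC} then turns the right-hand side into $\exp(-Ax-\sum A^{2j}t_{2j})\overline{C}=\psi(x)$. Hence \eqref{non-constraint} holds identically in $(x,\tau_{2k})$, and imposing the algebraic conditions \eqref{DTC}--\eqref{theorem-non-AT} on the initial data forces the reduction throughout the time evolution.

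For the second part, I would exploit $\partial_x^j\varphi=A^j\varphi$ and $\partial_x^j\psi=(-A)^j\psi$ to rewrite every column of the Wronski matrices as a polynomial in $A$ acting on $\varphi$ or $\psi$. Under complex conjugation and $x\to -x$ a column $A^j\varphi(x)$ becomes $(A^*)^j\varphi^*(-x)$; substituting $\varphi^*(-x)=\delta T^*\psi(x)$ (derived from \eqref{non-constraint} together with \eqref{theorem-1-t-t}) and using $(A^*)^jT^*=T^*A^j$ reduces it to $\delta T^*A^j\psi(x)$, while the parallel identity $\psi^*(-x)=T^*\varphi(x)$ converts the $\psi$-columns analogously. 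Factoring $T^*$ out of every column contributes $\det(T^*)$, the $\delta$'s accumulate into a power equal to the number of $\varphi$-columns in the original Wronskian, and interchanging the $\varphi$- and $\psi$-blocks and then restoring the alternating-sign pattern of $\widehat{\psi}^{(k)}$ produces additional signs of the forms $(-1)^{k\ell}$ and $(-1)^{k(k+1)/2}$. Carrying out this bookkeeping should yield
\[
 f^*(-x)=(-1)^{n+1}\delta^{n+1}\det(T^*)\,f(x),\qquad g^*(-x)=(-1)^{n+1}\delta^{n}\det(T^*)\,h(x),
\]
from which $q^*(-x)=2g^*(-x)/f^*(-x)=\delta^{-1}r(x)$, i.e.\ $r(x)=\delta q^*(-x)$. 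Matching $\delta=\mp 1$ from \eqref{theorem-1-t-t} with the sign in \eqref{nls-non-reduction-qr} then produces exactly $r(x)=\pm q^*(-x)$, and substitution into \eqref{akns-hierarchy-even} with $t_{2k}=i\tau_{2k}$ yields \eqref{nls-non-hierarchy}.

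The main obstacle I anticipate is the sign bookkeeping in the second part: block-swap signs of the form $(-1)^{k\ell}$, sign-flip factors $(-1)^{k(k+1)/2}$ from restoring the alternating pattern of $\widehat{\psi}^{(k)}$, and the accumulated powers of $\delta$ must all be tracked simultaneously and then collapsed via simple parity identities (such as $(n+1)^2\equiv n+1\pmod{2}$) into the clean forms above. Once the two substitutions $\varphi^*(-x)=\delta T^*\psi(x)$ and $(A^*)^jT^*=T^*A^j$ are in hand, the rest reduces to determinant multilinearity.
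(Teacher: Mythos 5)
Your strategy is the same as the paper's: first propagate the algebraic data \eqref{DTC} and \eqref{theorem-1-a-t} into the pointwise constraint \eqref{non-constraint} via $A^jT=T(A^*)^j$, then relate $f^*(-x)$ and $g^*(-x)$ to $f(x)$ and $h(x)$ by column substitution, a block swap and a sign count, and finally compare $q^*(-x)=2g^*(-x)/f^*(-x)$ with $r(x)=2h(x)/f(x)$. The first half of your argument is complete and correct and coincides with the paper's.

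The second half is where the entire content of the theorem sits, and there you have only asserted the outcome (``should yield''), not derived it; worse, the outcome you assert is inconsistent with the conclusion you then draw from it. From your two displayed relations one gets
\[
 q^*(-x)=\frac{2g^*(-x)}{f^*(-x)}=\delta^{-1}\,\frac{2h(x)}{f(x)}=\delta\, r(x),
\]
i.e.\ $r(x)=\delta q^*(-x)$, so $\delta=\mp1$ yields $r(x)=\mp q^*(-x)$ --- the \emph{opposite} pairing to the $r(x)=\pm q^*(-x)$ you claim in your final sentence and to the pairing in the theorem. Note also that your relation for $g^*(-x)$ differs by an overall minus sign from the one in the paper's proof, which reads $g^*(-x)=(-1)^{n(n+2)}|T^*|\delta^{n+2}h(x)=(-1)^{n}\delta^{n}|T^*|h(x)$ and which does produce $r(x)=-\delta q^*(-x)$. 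A direct test with the $n=0$ data of Sec.~\ref{sec-4-1-1} (there $f,g,h$ are $2\times2$ determinants, $\delta=1$, $|T^*|=-1$) gives $g^*(-x)=h(x)$, which agrees with your formula and not with the paper's; but then it is your last inference that fails, not your Wronskian identity. You must actually carry out the sign bookkeeping (one $\delta$ per $\varphi$-type column, one $(-1)^j$ per $j$-th--derivative column of each $\psi$-block on both sides of the identity, and a factor $(-1)^{pq}$ for swapping blocks of $p$ and $q$ columns), check the result against the explicit one-soliton, and then state which sign of $\delta$ corresponds to which sign in \eqref{nls-non-reduction-qr}. As written, the proof does not close: either your $g$-relation or your final matching carries a spurious sign, and the two cannot both stand.
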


\begin{proof}

We are going to prove that with the assumptions \eqref{assum-1} and \eqref{theorem-non-AT},
$q$ and $r$ defined by \eqref{akns-transformation} with double Wronskians \eqref{anks-solution-fgh}
satisfy the reduction \eqref{nls-non-reduction-qr}.
To achieve that, first, under assumptions of \eqref{DTC} and \eqref{theorem-1-a-t}, it is easy to verify that
\begin{align*}
\psi(x)=&\exp{(-Ax- i\sum^{\infty}_{j=1} A^{2j} \tau_{2j})}\overline{C} \\
       =& \exp{( -(TA^*T^{-1})x-i\sum^{\infty}_{j=1} (TA^*T^{-1})^{2j} \tau_{2j})} TC^*\\
       =& T (\exp{( -Ax+i\sum^{\infty}_{j=1} A^{2j} \tau_{2j})}C)^*\\
       =& T\varphi^*(-x).
\end{align*}
Next, to examine relation between Wronskians, based on \eqref{akns-psi-varphi}, we introduce notation
\begin{align*}
  &\widehat{\varphi}^{(n)}(ax)_{[bx]} = \Bigl(\varphi(ax),\partial_{bx}\varphi(ax),\partial_{bx}^2\varphi(ax),\cdots,\partial_{bx}^{n}\varphi(ax)\Bigr)
\end{align*}
and similar one for $\widehat{\psi}^{(m)}(ax)_{bx}$, where $a,b=\pm 1$.
With such notations, setting $n=m$ and noticing the constraint \eqref{non-constraint}, $f,g,h$ in \eqref{anks-solution-fgh} are expressed as
\begin{subequations}\label{nls-non-proof-2-fgh}
\begin{eqnarray}
&& f(x)=|\widehat{\varphi}^{(n)}(x)_{[x]}; \widehat{\psi}^{(n)}(x)_{[x]}| =|\widehat{\varphi}^{(n)}(x)_{[x]}; T \widehat{\varphi}^{*(n)}(-x)_{[x]}|,  \\
&& g(x)=|\widehat{\varphi}^{(n+1)}(x)_{[x]}; \widehat{\psi}^{(n-1)}(x)_{[x]}|= |\widehat{\varphi}^{(n+1)}(x)_{[x]};T \widehat{\varphi}^{*(n-1)}(-x)_{[x]}|,\\
&& h(x)=|\widehat{\varphi}^{(n-1)}(x)_{[x]}; \widehat{\psi}^{(n+1)}(x)_{[x]}| = |\widehat{\varphi}^{(n-1)}(x)_{[x]}; T \widehat{\varphi}^{*(n+1)}(-x)_{[x]}|.
\end{eqnarray}
\end{subequations}
By calculation we find
\begin{align*}
f^*(-x)&= |\widehat{\varphi}^{(n)}(-x)_{[-x]}; T \widehat{\varphi}^{*(n)}(x)_{[-x]}|^*    \\
& = |T^*|\delta^{n+1}|T \widehat{\varphi}^{*(n)}(-x)_{[-x]}; \widehat{\varphi}^{(n)}(x)_{[-x]}|    \\
& = |T^*|\delta^{n+1}(-1)^{(n+1)^2}| \widehat{\varphi}^{(n)}(x)_{[-x]};T \widehat{\varphi}^{*(n)}(-x)_{[-x]}|   \\
& = |T^*|\delta^{n+1}(-1)^{(n+1)^2}| \widehat{\varphi}^{(n)}(x)_{[x]};T \widehat{\varphi}^{*(n)}(-x)_{[x]}|   \\
& = |T^*|\delta^{n+1}(-1)^{(n+1)^2}f(x),
\end{align*}
where we have made use of $\partial_{-x}=-\partial_x$ and assumed $TT^*=\delta I$ where $\delta$ is a constant and $I$ is the unit matrix.
In a similar way we have
\[g^*(-x)=(-1)^{n(n+2)}|T^*|\delta^{n+2} h(x).\]
With these relations, by making a comparison of $q^*(-x)=2 g^*(-x)/f^*(-x)$ and $r(x)=2 h(x)/f(x)$
one can find that $r(x)= \pm q^*(-x)$ when $\delta=\mp 1$.
Thus we finish the proof.

\end{proof}

We note that to reduce \eqref{akns-hierarchy-even} to solutions for the local NLS hierarchies, 
we need same reduction condition as in Theorem \ref{theorem-nonlocal}
except \eqref{theorem-1-a-t} replaced with $AT+TA^*=0$.

\section{Solutions of the nonlocal NLS hierarchies}\label{sec-4}

In this section, we list Wronskian elements of the solutions  for the nonlocal NLS hierarchies \eqref{nls-non-hierarchy}.

\subsection{Soliton solutions}\label{sec-4-1}

\subsubsection{Focusing case}\label{sec-4-1-1}

For the nonlocal focusing NLS hierarchy
\begin{equation}\label{nls-hie-f}
 iq(x)_{\tau_{2k}} = K^{-}_k[q(x),q(-x)],~~k =1,2,3,\ldots,
\end{equation}
according to Theorem \ref{theorem-nonlocal}, we take $\delta = 1$ and
\begin{subequations}\label{AT-f-sol}
\begin{equation}\label{soli-a}
 A= \left(
      \begin{array}{cc}
        A_1 & 0 \\
        0 & A^*_1 \\      \end{array}
    \right),~~T= \left(
                   \begin{array}{cc}
                     0 & I \\
                     I & 0 \\
                   \end{array}
                 \right),~~C= (c_1,c_2,\cdots,c_{2n+2})^T,
\end{equation}
where
\begin{equation}
A_1 = \mathrm{Diag}(k_1,k_2,\cdots,k_{n+1}),~~ k_j\in \mathbb{C}
\label{A1}
\end{equation}
\end{subequations}
and $k_j$ are distinct.
The Wronskian elements are taken as
\begin{subequations}\label{phi-f-sol}
\begin{equation}\label{soli-fouc-varphi}
 \varphi(x) = \left(
             \begin{array}{c}
               \Phi_1 \\
               \Phi_2 \\
             \end{array}
           \right),
           ~~ \psi(x)= T \varphi^*(-x),
\end{equation}
where
\begin{equation}
\Phi_j=(\varphi_{j,1},\varphi_{j,2},\cdots, \varphi_{j,n+1})^T,~~ (j=1,2),
\label{Phi-j}
\end{equation}
\begin{equation}
\varphi_{1,l}=e^{\xi_l(x,k_l)},~~ \varphi_{2,l}=e^{\xi_l(x,k_l^*)},
\end{equation}
and
\begin{equation}
\xi_l(x,k_l) = k_l x + i \sum_{j=1}^{\infty} k_l^{2j} \tau_{2j}.
\label{xi}
\end{equation}
\end{subequations}

The simplest solution to the nonlocal  focusing NLS equation \eqref{nls-non-focusing}
is obtained by taking $n=0$ in double Wronskian and the solution is given as
\begin{equation}\label{soli-focu-1-soliton}
 q_{\mathrm{1ss}}(x)  = \frac{-4i c_1 c_2\mathrm{Im}(k_1) e^{2k_1 x +i 2k_1^2\tau_2}}
 {|c_1|^2 e^{i 4\mathrm{Im}(k_1)x }e^{-8\mathrm{Re}(k_1)\mathrm{Im}(k_1)\tau_2}- |c_2|^2},
\end{equation}
which is same as the one obtained in \cite{zhu-2015} and \cite{huang2016}, up to some scalar transformations.
And \eqref{soli-focu-1-soliton} is a nonsingular periodic solution of the case of $ \mathrm{Re}(k_1) = 0,~\mathrm{Im}(k_1)\neq 0$ and $|c_1|\neq|c_2| $.
Solution modeled by $k_1$ and $k_2$ is
\[ q_{\mathrm{2ss}}(x)  =2 \frac{E}{F}\]
where
\begin{align*}
E=& c_1c_2(k_1-k_2)e^{2\xi_1+2\xi_2}[c_2^*c_3(k_1-k_1^*)(k_2-k_1^*)e^{2k_1^* x} -c_1^*c_4(k_1-k_2^*)(k_2-k_2^*)e^{2k_2^* x}]\\
                                     &+c_3c_4(k_2^*-k_1^*)e^{2(k_1^*+k_2^*)x}[c_2c_3^*(k_2-k_1^*)(k_2-k_2^*)e^{2\xi_2}
                                     +c_1c_4^*(k_1-k_1^*)(k_2^*-k_1)e^{2\xi_1}],\\
F=& |c_1c_2(k_1-k_2)|^2e^{2\xi_1+2\xi_2}+ |c_3c_4(k_2-k_1)|^2e^{2(k_1^*+k_2^*)x} -|c_1c_4(k_2-k_1^*)|^2 e^{2\xi_1+2k_2^* x}\\
                                     &-|c_2c_3(k_2-k_1^*)|^2e^{2\xi_2+2k_1^* x}
                                     +4\mathrm{Im}(k_1)\mathrm{Im}(k_2)[c_1c_2^*c_3c_4^*e^{2\xi_1+2k_1^* x} +c_1^*c_2c_3^*c_4e^{2\xi_2+2k_2^* x}].
\end{align*}

\subsubsection{Defocusing case}\label{sec-4-1-2}

For the nonlocal defocusing NLS hierarchy
\begin{equation}\label{nls-hie-df}
 iq(x)_{\tau_{2k}} = K^{+}_k[q(x),q(-x)],~~k =1,2,3,\ldots,
\end{equation}
from Theorem \ref{theorem-nonlocal}, we take $\delta =- 1$ and
\begin{equation}\label{soli-a-defocusing}
 A= \left(
      \begin{array}{cc}
        A_1 & 0 \\
        0 & A^*_1 \\      \end{array}
    \right),~~T= \left(
                   \begin{array}{cc}
                     0 & -I \\
                     I & 0 \\
                   \end{array}
                 \right),~~C= (c_1,c_2,\cdots,c_{2n+2})^T,
\end{equation}
where $A_1$ is the diagonal matrix \eqref{A1},
 $\varphi$ and $\psi$ follow the structure \eqref{phi-f-sol} but with $T$ defined as in \eqref{soli-a-defocusing}.

Solutions to  the nonlocal defocusing NLS equation \eqref{nls-non-defocusing} are ($n=0$)
\begin{equation}\label{soli-defocu-1-soliton}
 q_{\mathrm{1ss}}(x)  = \frac{-4i c_1 c_2\mathrm{Im}(k_1) e^{2k_1 x +i 2k_1^2\tau_2}}
 {|c_1|^2 e^{i 4\mathrm{Im}(k_1)x }e^{-8\mathrm{Re}(k_1)\mathrm{Im}(k_1)\tau_2}+ |c_2|^2},
\end{equation}
which is a nonsingular solution of the case of $ \mathrm{Re}(k_1) = 0,~\mathrm{Im}(k_1)\neq 0$ and $|c_1|\neq|c_2| $, and
\[ q_{\mathrm{2ss}}(x)  =2 \frac{G}{H}\]
where
\begin{align*}
G=& c_1c_2(k_1-k_2)e^{2\xi_1+2\xi_2}[c_1^*c_4(k_1-k_2^*)(k_2-k_2^*)e^{2k_2^* x}-c_2^*c_3(k_1-k_1^*)(k_2-k_1^*)e^{2k_1^* x}]\\
                                     &+c_3c_4(k_2^*-k_1^*)e^{2(k_1^*+k_2^*)x}[c_2c_3^*(k_2-k_1^*)(k_2-k_2^*)e^{2\xi_2}
                                     +c_1c_4^*(k_1-k_1^*)(k_2^*-k_1)e^{2\xi_1}], \\
H=& |c_1c_2(k_1-k_2)|^2e^{2\xi_1+2\xi_2}- |c_3c_4(k_2-k_1)|^2e^{2(k_1^*+k_2^*)x}+|c_1c_4(k_2-k_1^*)|^2 e^{2\xi_1+2k_2^* x} \\ 
&-|c_2c_3(k_2-k_1^*)|^2e^{2\xi_2+2k_1^* x} +4\mathrm{Im}(k_1)\mathrm{Im}(k_2)[c_1c_2^*c_3c_4^*e^{2\xi_1+2k_1^* x} -c_1^*c_2c_3^*c_4e^{2\xi_2+2k_2^* x}].
\end{align*}

\subsection{Jordan block solutions}

Jordan block solutions mean the solutions corresponding to the matrix $A$ taking the form
\begin{subequations}
\begin{equation}\label{jordan-a}
 A= \left(
      \begin{array}{cc}
        A_1 & 0 \\
        0 & A^*_1 \\
      \end{array}
    \right)_{(2n+2)\times(2n+2)},
\end{equation}
where $A_1$ is a $(n+1)\times (n+1)$ Jordan block
\begin{equation}
A_1= k_1 I + \sigma,~~ I=(\delta_{j,l})_{(n+1)\times (n+1)},~~ \sigma = (\delta_{j,l+1})_{(n+1)\times (n+1)}.
\end{equation}
\end{subequations}

\subsubsection{Focusing case}\label{sec-4-2-1}

For the nonlocal focusing NLS hierarchy \eqref{nls-hie-f}, to get their solutions, we take
$\delta =1$,
\begin{equation}\label{jordan-foc-T}
  T= \left(
       \begin{array}{cc}
         0 & I \\
         I & 0 \\
       \end{array}
     \right)
\end{equation}
and
\begin{equation}\label{C}
C=(C_1,C_1)^T,~~~ C_1=(1,1,\cdots,1)_{n+1}.
\end{equation}
The Wronskian elements take the form \eqref{soli-fouc-varphi} in which
\begin{subequations}
\begin{eqnarray}
&& \Phi_1(x, k_1)   = \Bigl[ e^{\xi_1(x, k_1)}+\sum^{n}_{l=1}\frac{1}{l!} (\partial_{k_1}^l e^{\xi_1(x,k_1)})\sigma^l \Bigr]  C_1^T, \\
&& \Phi_2(x, k_1^*) =\Bigl[ e^{\xi_1(x, k_1^*)}+\sum^{n}_{l=1}\frac{1}{l!} (\partial_{k_1^*}^l e^{\xi_1(x,k_1^*)})\sigma^l \Bigr] C_1^T,
\end{eqnarray}
\end{subequations}
with
\begin{equation}\label{jordan-scatter}
 \xi_1(x,k_1) = k_1 x + i\sum_{j=1}^{\infty} k_1^{2j} \tau_{2j}.   \\
\end{equation}

For the nonlocal NLS equation \eqref{nls-non-focusing}, when $n=1$, from \eqref{jordan-scatter} we have
\begin{equation} \label{jordan-scatter-case}
 \xi_1(x,k_1) = k_1 x + i k_1^{2}\tau_{2},
\end{equation}
and
\begin{equation}
 \varphi = \left(
             \begin{array}{c}
               e^{\xi_1(x,k_1)}( I + \partial_{k_1} e^{\xi_1(x,k_1)} \sigma)  C_1^T \\
               e^{\xi_1(x,k_1^*)}(I+ \partial_{k_1^*} e^{\xi_1(x,k_1^*)} \sigma) C_1^T \\
             \end{array}
           \right),~~\psi= \left(
                             \begin{array}{c}
                                e^{\xi_1^*(-x,k_1^*)}(I+ \partial_{k_1} e^{\xi_1^*(-x,k_1^*)} \sigma) C_1^T \\
                                e^{\xi_1^*(-x,k_1)}( I + \partial_{k_1^*} e^{\xi_1^*(-x,k_1)} \sigma) C_1^T \\
                             \end{array}
                           \right).
\end{equation}
The corresponding  solution reads
\begin{equation}\label{jordan-fou-solu-case}
 q(x)= \frac{4(k_1^* -k_1)( e^{2\xi_1(x,k_1)}a(k_1,k_1^*)- e^{2\xi_1(x,k_1^*)} a(k_1^*,k_1))}{e^{ 2\xi_1(x,k_1)-2\xi_1(x,k_1^*)}+
 e^{2\xi_1(x,k_1^*)- 2\xi_1(x,k_1)} - b},
\end{equation}
with
\begin{subequations}\label{hha}
\begin{eqnarray}
&& a(k_1,k_1^*) = (k_1^*- k_1) x + 2 i k_1^* (k_1^*-k_1)\tau_2 +1 ,  \\
&& b= 4(k_1-k_1^*)^2 x^2 +  8i(k_1-k_1^*)^2( k_1 + k_1^*) x \tau_2 - 16k_1 k_1^*(k_1-k_1^*)^2 \tau_2^2 +2  .
\end{eqnarray}
\end{subequations}

\subsubsection{Defocusing case}

For the nonlocal defocusing NLS hierarchy \eqref{nls-hie-df},
in Theorem \ref{theorem-nonlocal} we take $\delta = -1$, and $A$, $C$, $\Phi_j$ same as in Sec.\ref{sec-4-2-1},
but
\begin{equation}\label{jordan-defoc-T}
  T= \left(
       \begin{array}{cc}
         0 & -I \\
         I & 0 \\
       \end{array}
     \right).
\end{equation}

For the nonlocal defocusing NLS equation \eqref{nls-non-defocusing}, the simplest solution of Jordan block case is
\begin{equation}\label{jordan-defou-solu-case}
 q(x)= \frac{4(k_1^* -k_1)( e^{2\xi_1(x,k_1)}a(k_1,k_1^*)+ e^{2\xi_1(x,k_1^*)} a(k_1^*,k_1))}{e^{ 2\xi_1(x,k_1)-2\xi_1(x,k_1^*)}+
 e^{2\xi_1(x,k_1^*)- 2\xi_1(x,k_1)} + b},
\end{equation}
where $ a(k_1,k_1^*)$ and $ b$ are denoted by \eqref{hha}.

\section{Conclusions}\label{sec-5}

In this letter by direct reduction  we have derived doubled Wronskian solutions of the nonlocal
NLS hierarchies from the known ones of the even-order AKNS hierarchy.
Compared with the double Wronskian solution obtained in \cite{lixutao-2015} via Darboux transformation
where all $\{k_j\}$ are pure imaginary, here in our solutions all $\{k_j\}$ are complex,
which admits more freedom.
Besides, we also obtain Jordan block solutions which correspond to
multiple-pole case of the transmission coefficient.
Since equation \eqref{theorem-non-AT} admits block diagonal extension,
it is easy to extend our results to case where $A$ is
a block-diagonal composed by diagonal matrix and Jordan blocks.
The reduction procedure developed in this letter is general and can apply to other systems which have double Wronskian solutions
and admit local and nonlocal reductions.

\vskip 15pt
\subsection*{Acknowledgments}
This work was supported by  the NSF of China [grant numbers 11371241, 11631007].

\small

\end{document}